\theoremstyle{plain}
\newtheorem{theorem}{Theorem}[section]
\newtheorem{lemma}[theorem]{Lemma}
\theoremstyle{definition}
\newtheorem{mydefn}[theorem]{Definition}
\newtheorem{example}[theorem]{Example}
\newcommand{\action}[1]{\stackrel{#1}{\rightarrow}}
\newcommand{\faction}[1]{\stackrel{#1}{\rightarrow}_F}
\newcommand{\actions}[1]{\stackrel{#1}{\Rightarrow}}
\newcommand{\sactions}[1]{\stackrel{#1}{\Rightarrow}|}
\newcommand{\fsactions}[1]{\stackrel{#1}{\Rightarrow}_F|}
\newcommand{\SRS}{\underset{\thicksim}{\sqsubset}_{RS}}
\newcommand{\RS}{\sqsubseteq_{RS}}
\begin{document}

\author[nuaa]{Yan Zhang}

\author[nuaa]{Zhaohui Zhu\corref{cor}}

\author[nanshen]{Jinjin Zhang}

\cortext[cor]{Corresponding author. Email: zhaohui@nuaa.edu.cn (Zhaohui Zhu).}

\address[nuaa]{College of Computer Science, Nanjing University of Aeronautics and Astronautics, Nanjing, P.R. China, 210016}
\address[nanshen]{College of Information Science, Nanjing Audit University, Nanjing, P.R. China, 211815}

\title{On the greatest solution of equations in $\text{CLL}_R$ \tnoteref{t1}}
\tnotetext[t1]{This work received financial support of the National Natural Science of China (No. 60973045) and NSF of the Jiangsu Higher Education Institutions (No. 13KJB520012)}
\date{\today}
\begin{abstract}
It is shown that, for any equation $X=_{RS} t_X$  in the LLTS-oriented process calculus $\text{CLL}_R$, if $X$ is strongly guarded in $t_X$, then the recursive term $\langle X|X=t_X \rangle$ is the greatest solution of this equation w.r.t L\"{u}ttgen and Vogler's ready simulation.
\end{abstract}

\begin{keyword}
  logic labelled transition system  \sep process calculus  \sep specification \sep solution of equations
\end{keyword}

\journal{Information processing letters}

\maketitle
%

\section{Introduction}

   The notion of logic labelled transition system (LLTS for short), proposed by L\"{u}ttgen and Vogler, provides a framework to combine operational and logical styles of specification [2,3,4].
   Recently, inspired by this work, we propose an LLTS-oriented process calculus $\text{CLL}_R$, and establish the uniqueness of solutions of equations in $\text{CLL}_R$ under a certain circumstance [5].
   This note considers solutions of equations in $\text{CLL}_R$ furtherly.
   Firstly, through giving an example, it will be shown that, without the assumption that $X$ does not occur in the scope of any conjunction in $t$, an equation $X=_{RS}t$ may have more than one consistent solution.
   Secondly, under the hypothesis that $X$ is strongly guarded in a given term $t$, it will be shown that the process $\langle X|X=t\rangle$ is the greatest solution of the equation $X=_{RS}t$.
This result reveals that $\langle X|X=t \rangle$ captures the loosest specification satisfying the equation $X=_{RS}t$ whenever $X$ is strongly guarded in $t$.
    The rest of this note is organized as follows.
    The next section recalls some related notions and results.
   The main result will be given in Section~3.

\section{Preliminaries}



This section will recall a number of related notions and results. Given space limitation, we only list these ones. For details see [2,3,4,5].
We begin with recalling the notion of LLTS.
Let $Act$ be the set of visible action names ranged over by $a$, $b$, etc., and let $Act_{\tau}$ denote $Act \cup \{\tau\}$ ranged over by $\alpha$ and $\beta$, where $\tau$ represents invisible actions.
A labelled transition system with predicate is a quadruple $(P,Act_{\tau},\rightarrow, F)$, where $P$ is a set of states, $\rightarrow \subseteq P\times Act_{\tau}\times P$ is the transition relation and $F\subseteq P$.
As usual, we write $p \stackrel{\alpha}{\rightarrow}$ (or, $p \not \stackrel{\alpha}{\rightarrow}$) if $\exists q\in P.p\stackrel{\alpha}{\rightarrow}q$ ($\nexists q\in P.p  \stackrel{\alpha}{\rightarrow}q$, resp.).
The ready set $\{\alpha \in Act_{\tau}:p \stackrel{\alpha}{\rightarrow}\}$ of a given state $p$ is denoted by $\mathcal{I}(p)$.
A state $p$ is stable if $p \not\stackrel{\tau}{\rightarrow}$.
Some useful decorated transition relations are listed below:

(1) $p \stackrel{\alpha}{\rightarrow}_F q$ iff $p \stackrel{\alpha}{\rightarrow} q$ and $p,q\notin F$;
(2) $p \stackrel{\epsilon}{\Rightarrow}q$ iff $p (\stackrel{\tau}{\rightarrow})^* q$, where $(\stackrel{\tau}{\rightarrow})^* $ is the transitive and reflexive closure of $\stackrel{\tau}{\rightarrow}$;
(3) $p \stackrel{\alpha}{\Rightarrow}q$ iff $\exists r,s\in P.p \stackrel{\epsilon}{\Rightarrow} r \stackrel{\alpha}{\rightarrow}s \stackrel{\epsilon}{\Rightarrow} q$;
(4) $p \stackrel{\gamma}{\Rightarrow}|q$ iff $p \stackrel{\gamma}{\Rightarrow}q \not\stackrel{\tau}{\rightarrow}$ with $\gamma \in Act_{\tau}\cup \{\epsilon\}$;
(5) $p\stackrel{\epsilon }{\Rightarrow }_Fq$ iff there exists a sequence of $\tau$-transitions from $p$ to $q$ such that all states along this sequence, including $p$ and $q$, are not in $F$; the decorated transition $p \stackrel{\alpha }{\Rightarrow }_Fq$ may be defined similarly;
(6) $p \stackrel{\gamma}{\Rightarrow}_F|q$ iff $p \stackrel{\gamma}{\Rightarrow}_F q \not\stackrel{\tau}{\rightarrow}$ with $\gamma \in Act_{\tau} \cup \{\epsilon\}$.


\begin{mydefn}[{[3]}]\label{D:LLTS}
 An LTS $(P,Act_{\tau},\rightarrow,F)$ is an LLTS, if, for each $p \in P$,
\noindent\textbf{(LTS1) }$p \in F$ if $\exists\alpha\in \mathcal{I}(p)\forall q\in P(p \stackrel{\alpha}{\rightarrow}q \;\text{implies}\; q\in F)$;
\noindent\textbf{(LTS2)} $p \in F$ if $\nexists q\in P.p \stackrel{\epsilon}{\Rightarrow}_F|q$.
An LLTS $(P,Act_{\tau},\rightarrow,F)$ is {$\tau$}-pure if, for each $p \in P$, $p\stackrel{\tau}{\rightarrow}$ implies $\nexists a\in Act.\;p\stackrel{a}{\rightarrow}$.
\end{mydefn}


Compared with usual LTSs, one distinctive feature of LLTS is that it involves consideration of inconsistencies.
The motivation behind such consideration lies in dealing with inconsistencies caused by conjunctive composition.
 The predicate $F$ in LLTS is used to denote the set of all inconsistent states.
The condition (LTS1) formalizes the backward propagation of inconsistencies, and (LTS2) captures
the intuition that divergence should be viewed as catastrophic.
A variant of the usual notion of weak ready simulation is recalled below, which
is adopted to capture the refinement relation between processes in [3,4].


\begin{mydefn}[{[3]}]\label{D:RS}
Let $(P, Act_{\tau}, \rightarrow , F)$ be an  LLTS.
A relation ${\mathcal R} \subseteq P \times P$ is a stable ready simulation relation, if, for any $(p,q) \in {\mathcal R}$ and $a \in Act $,
\textbf{(RS1)} both $p$ and $q$ are stable;
\textbf{(RS2)} $p \notin F$ implies $q \notin F$;
\textbf{(RS3)} $p \stackrel{a}{\Rightarrow}_F|p'$ implies $\exists q'.q \stackrel{a}{\Rightarrow}_F|q'\; \textrm{and}\;(p',q') \in {\mathcal R}$;
\textbf{(RS4)} $p\notin F$ implies ${\mathcal I}(p)={\mathcal I}(q)$.

 We say that $p$ is stable ready simulated by $q$, in symbols $p \underset{\thicksim}{\sqsubset}_{RS} q$, if there exists a stable ready simulation relation $\mathcal R$ with $(p,q) \in {\mathcal R}$.
 Further, $p$ is ready simulated by $q$, written $p\sqsubseteq_{RS}q$, if
 $\forall p'(p\stackrel{\epsilon}{\Rightarrow}_F| p' \;\text{implies}\; \exists q'(q \stackrel{\epsilon}{\Rightarrow}_F| q'\; \text{and}\;p' \underset{\thicksim}{\sqsubset}_{RS} q'))$.
 The kernels of $\underset{\thicksim}{\sqsubset}_{RS}$ and $\sqsubseteq_{RS}$ are denoted by $\approx_{RS}$ and $=_{RS}$ resp..
\end{mydefn}


Next we fix some notations and terminologies related to $\text{CLL}_R$ and recall some results obtained in [5].
Let $V_{AR}$ be an infinite set of variables.
Terms of $\text{CLL}_{R}$ are given by the BNF grammar:
\[t::= 0 \mid \perp \mid (\alpha.t) \mid (t\Box t)\mid(t\wedge t)\mid(t\vee t)\mid(t\parallel_A t)\mid X\mid \langle Z|E \rangle,\]
where $X \in V_{AR}$, $\alpha\in Act_\tau$, $A\subseteq Act$ and recursive specification
$E = E(V)$ with $V \subseteq V_{AR}$ is a set of equations $\{Y = t: Y \in V\}$ and $Z$ is a variable in $V$ that acts as the initial variable.
We often denote $\langle X|\{X=t\}\rangle$ briefly by $\langle X|X=t\rangle$.
In addition to standard operators in CCS and CSP, operators $\perp$, $\wedge$ and $\vee$ are introduced in $\text{CLL}_R$:
$\bot$ represents an inconsistent process;
$\vee$ and $\wedge$ are  used to describe logical combinations of processes.

For any term $\langle Z|E \rangle$ with $E=E(V)$, each variable in  $V$ is bound with scope $E$.
This induces the notion of free occurrence of variable, bound (and free) variables and $\alpha$-equivalence as usual.
The set of all processes (i.e., closed terms) is denoted by $T(\Sigma_{\text{CLL}_R} )$.
We use $p,q,r$ to represent processes.
Throughout this note, we assume that recursive variables are distinct from each other and no recursive variable has free occurrence; moreover we don't distinguish between $\alpha$-equivalent terms and use $\equiv$ for both syntactical identical and $\alpha$-equivalence.
 For any $t$, the term
 $t\{\langle X|E \rangle/X: X \in V\}$ is denoted briefly by $\langle t|E \rangle$.
A context $C_{\widetilde{X}}$ is a term whose free variables are in $n$-tuple distinct variables $\widetilde{X}=(X_1,...,X_n)$ with $n \geq 0$.
 Given $\widetilde{p}=(p_1,\dots,p_n)$, the term $C_{\widetilde{X}}\{\widetilde{p}/\widetilde{X}\}$ is obtained from $C_{\widetilde{X}}$ by replacing $X_i$ by $p_i$ for each $i \leq n$ simultaneously.

Given a term $t$, a variable $X$ is strongly (or weakly) guarded in $t$ if each occurrence of $X$ is within some subexpression $a.t_1$ ($\tau.t_1$ or $t_1 \vee t_2$ resp.).
 As usual, we assume that all recursive specifications (say $E(V)$) considered in the sequel are guarded (that is, for each $X \in V$ and $Z = t_Z \in E(V)$, each occurrence of $X$ is within some subexpression $a.t_1$ or $\tau.t_1$ or $t_1 \vee t_2$).

SOS rules of $\text{CLL}_R$ are divided into two parts: operational rules and predicate rules.
Here we only list these rules in Table~\ref{Ta:OPERATIONAL_RULES}.
For motivation behind these rules, we refer the reader to [5].

%
%
%

\begin{table}[rht]
\rule{\textwidth}{0.5pt}
\noindent \textbf{Operational rules}\\
    $\begin{array}{llll}
          Ra_1\frac{-}{\alpha.x_1 \stackrel{\alpha}{\rightarrow} x_1}  &
          Ra_2\frac{x_1 \stackrel{a}{\rightarrow} y_1, x_2 \not \stackrel{\tau}{\rightarrow}}{x_1 \Box x_2 \stackrel{a}{\rightarrow} y_1} &
        Ra_3\frac{x_1 \not\stackrel{\tau}{\rightarrow} , x_2 \stackrel{a}{\rightarrow} y_2 }{x_1 \Box x_2 \stackrel{a}{\rightarrow} y_2} &
         Ra_4\frac{x_1 \stackrel{\tau}{\rightarrow} y_1}{x_1 \Box x_2 \stackrel{\tau}{\rightarrow} y_1 \Box x_2} \\
          Ra_5\frac{x_2 \stackrel{\tau}{\rightarrow} y_2}{x_1 \Box x_2 \stackrel{\tau}{\rightarrow} x_1 \Box y_2}&
        Ra_6\frac{x_1 \stackrel{a}{\rightarrow} y_1, x_2 \stackrel{a}{\rightarrow}y_2}{x_1 \wedge x_2 \stackrel{a}{\rightarrow} y_1 \wedge y_2}&
          Ra_7\frac{x_1 \stackrel{\tau}{\rightarrow} y_1}{x_1 \wedge x_2 \stackrel{\tau}{\rightarrow} y_1 \wedge x_2} &
          Ra_8\frac{x_2 \stackrel{\tau}{\rightarrow} y_2}{x_1 \wedge x_2 \stackrel{\tau}{\rightarrow} x_1 \wedge y_2}\\
         Ra_9\frac{-}{x_1 \vee x_2 \stackrel{\tau}{\rightarrow} x_1}&
         Ra_{10}\frac{-}{x_1 \vee x_2 \stackrel{\tau}{\rightarrow} x_2} &
             Ra_{11}\frac{x_1 \stackrel{\tau}{\rightarrow} y_1}{x_1 \parallel_A x_2 \stackrel{\tau}{\rightarrow} y_1\parallel_A x_2} &
         Ra_{12}\frac{x_2 \stackrel{\tau}{\rightarrow} y_2}{x_1 \parallel_A x_2 \stackrel{\tau}{\rightarrow} x_1 \parallel_A y_2}
  \end{array}$
  $\begin{array}{ll}
          Ra_{13}\frac{x_1 \stackrel{a}{\rightarrow} y_1 , x_2 \not \stackrel{\tau}{\rightarrow} }{x_1 \parallel_A x_2 \stackrel{a}{\rightarrow} y_1 \parallel_A x_2}(a\notin A)&
       \qquad  Ra_{14}\frac{x_1 \not\stackrel{\tau}{\rightarrow} , x_2 \stackrel{a}{\rightarrow} y_2 }{x_1 \parallel_A x_2 \stackrel{a}{\rightarrow} x_1 \parallel_A y_2}(a\notin A)\\
          Ra_{15}\frac{x_1 \stackrel{a}{\rightarrow} y_1, x_2 \stackrel{a}{\rightarrow}y_2}{x_1\parallel_A x_2 \stackrel{a}{\rightarrow} y_1 \parallel_A y_2} (a\in A)&
        \qquad  Ra_{16}\frac{\langle t_X| E \rangle  \stackrel{\alpha}{\rightarrow} y}{\langle X|E \rangle \stackrel{\alpha}{\rightarrow} y}(X=t_X \in E)
     \end{array}$
\noindent $\;$\\

\noindent \textbf{Predicative rules} \\
$\begin{array}{lllll}
           Rp_1\frac{-}{\bot F}&
            \quad Rp_2\frac{x_1 F}{\alpha .x_1 F} &
           \quad Rp_3\frac{x_1 F, x_2 F}{x_1\vee x_2 F}&
           \quad  Rp_4\frac{x_1 F}{x_1\Box x_2 F}&
           \quad Rp_5\frac{x_2 F}{x_1\Box x_2 F}\\
            Rp_6\frac{x_1 F}{x_1\parallel_A x_2 F}&
         \quad  Rp_7\frac{x_2 F}{x_1\parallel_A x_2 F}&
          \quad  Rp_8\frac{x_1 F}{x_1\wedge x_2 F}&
          \quad  Rp_9\frac{x_2 F}{x_1\wedge x_2 F}
    \end{array}$
    $\begin{array}{lll}
           Rp_{10}\frac{x_1 \stackrel{a}{\rightarrow} y_1, x_2 \not\stackrel{a}{\rightarrow}, x_1 \wedge x_2 \not\stackrel{\tau}{\rightarrow}}{x_1 \wedge x_2 F}&
            Rp_{11}\frac{x_1 \not\stackrel{a}{\rightarrow} , x_2 \stackrel{a}{\rightarrow} y_2, x_1 \wedge x_2 \not\stackrel{\tau}{\rightarrow}}{x_1 \wedge x_2 F}&
           Rp_{12}\frac{x_1 \wedge x_2 \stackrel{\alpha}{\rightarrow} z, \{yF:x_1 \wedge x_2 \stackrel{\alpha}{\rightarrow}y\}}{x_1 \wedge x_2 F} \\
           Rp_{13}\frac{\{yF:x_1 \wedge x_2 \stackrel{\epsilon}{\Rightarrow}|y\}}{x_1 \wedge x_2 F}  &
            Rp_{14}\frac{\langle t_X|E \rangle F}{\langle X|E \rangle F}(X = t_X \in E) &
           Rp_{15}\frac{\{yF:\langle X|E \rangle \stackrel{\epsilon}{\Rightarrow}|y\}}{\langle X|E \rangle F}
    \end{array}
    $

\rule{\textwidth}{0.5pt}
\caption{SOS rules of $\text{CLL}_R$\label{Ta:OPERATIONAL_RULES}}
\end{table}

 The calculus $\text{CLL}_R$ has the unique stable transition model (denoted by $M_{\text{CLL}_R}$), which exactly consists of all positive literals of the form $t \stackrel{\alpha}{\rightarrow}t'$ or $tF$ that are provable in $Strip(\text{CLL}_{R},M_{\text{CLL}_{R}})$ [5].
 Here $Strip(\text{CLL}_{R},M_{\text{CLL}_{R}})$ is the stripped version [1] of $\text{CLL}_{R}$ w.r.t $M_{\text{CLL}_{R}}$.
 Each rule in $Strip(\text{CLL}_{R},M_{\text{CLL}_{R}})$ is of the form $\frac{pprem(r)}{conc(r)}$ for some ground instance  $r$ of rules in Table~\ref{Ta:OPERATIONAL_RULES} such that $M_{\text{CLL}_{R}}\models nprem(r)$, where  $nprem(r)$ (or, $pprem(r)$) is the set of negative (positive resp.) premises of $r$, $conc(r)$ is the conclusion of $r$ and $M_{\text{CLL}_{R}}\models nprem(r)$ means that for each $t\not\stackrel{\alpha}{\rightarrow} \in nprem(r)$, $t\stackrel{\alpha}{\rightarrow}s \notin M_{\text{CLL}_{R}}$ for any $s$.
 The notion of proof tree in $Strip(\text{CLL}_{R},M_{\text{CLL}_{R}})$ is defined as usual [1].
 Notice that all proof trees are well-founded, and such fact will play central role in demonstrating the consistency of processes.
 Based on  $M_{\text{CLL}_{R}}$, we can get the LTS $(T(\Sigma_{\text{CLL}_{R}}),Act_{\tau},\rightarrow_{\text{CLL}_{R}},F_{\text{CLL}_{R}})$ ($LTS(\text{CLL}_{R})$ for short) in the standard way (e.g., [1]).
    For simplicity,  we always omit the subscripts in $\stackrel{\alpha}{\rightarrow}_{\text{CLL}_{R}}$ and $F_{\text{CLL}_{R}}$.
We end this section by recalling some fundamental properties of $LTS(\text{CLL}_{R})$, which are asserted by  Theorems~4.1 and 6.1 and Lemma~4.2 in [5].

%

 \begin{theorem}\label{L:LLTS}
 (1) $LTS({\text{CLL}_{R}})$ is a $\tau$-pure LLTS.
\noindent (2) If $p\in F$ and $\tau\in \mathcal{I}(p)$ then $\forall q(p\stackrel{\tau}{\rightarrow}q \;\text{implies}\; q \in F)$, and hence $p \sactions{\epsilon}q$ and $q \notin F$ implies $p \fsactions{\epsilon}q$.
\noindent (3) If $p \sqsubseteq_{RS} q$ then $C_X\{p/X\} \sqsubseteq_{RS} C_X\{q/X\}$ for any $C_X$, and hence, if $p \sqsubseteq_{RS} q$ and  $C_X\{p/X\} \notin F$ then $C_X\{q/X\} \notin F$.
\end{theorem}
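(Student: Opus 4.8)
The plan is to prove all three parts on the concrete system $LTS(\text{CLL}_R)$ generated by Table~\ref{Ta:OPERATIONAL_RULES}, so the basic tool throughout is induction on the well-founded proof trees of $Strip(\text{CLL}_R,M_{\text{CLL}_R})$, combined with structural induction on terms and a case analysis on the last rule applied. I would treat the parts in the stated order, since (2) uses (1) and each ``hence'' clause is a short corollary of the claim preceding it.

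For (1), $\tau$-purity comes first. Inspecting the operational rules, every visible transition of $\alpha.t$, $t_1\Box t_2$, $t_1\wedge t_2$ and $t_1\parallel_A t_2$ is produced only by a rule ($Ra_1$, $Ra_2$--$Ra_3$, $Ra_6$, $Ra_{13}$--$Ra_{15}$) whose premises force the participating components to be stable, while each $\tau$-transition ($Ra_4$--$Ra_{12}$) requires a component to move by $\tau$; an induction on proof-tree height (so as to accommodate the unfolding rule $Ra_{16}$, under which $\langle X|E\rangle$ inherits the transitions of $\langle t_X|E\rangle$) then excludes the coexistence of a $\tau$- and an $a$-transition, and $\vee$ is pure since it has no visible rule at all. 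For the LLTS conditions I would check LTS1 and LTS2 operator by operator against the predicate rules: the backward-propagation clause LTS1 is discharged for $\wedge$ directly by $Rp_{12}$, and for prefix, $\vee$, $\Box$, $\parallel_A$ and recursion by the propagation rules $Rp_2$--$Rp_7$ and $Rp_{14}$ together with the induction hypothesis; the divergence clause LTS2 is the literal content of $Rp_{13}$ and $Rp_{15}$ for the two constructs ($\wedge$ and recursion) for which it is not inherited from a component, and reduces to the hypothesis via $Rp_3$, $Rp_4$/$Rp_5$ and $Rp_6$/$Rp_7$ for $\vee$, $\Box$ and $\parallel_A$. Well-foundedness of proof trees is what legitimates this least-model reading of $F$.

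For (2) I would induct on a proof tree witnessing $pF$. Since $\tau\in\mathcal I(p)$ forces, by $\tau$-purity, that $p$ has only $\tau$-transitions, the rules $Rp_{10}$, $Rp_{11}$ (which demand $p\not\stackrel{\tau}{\rightarrow}$) are excluded; $Rp_{12}$ already asserts the conclusion for $\alpha=\tau$; $Rp_3$ puts both $\tau$-derivatives of a $\vee$ into $F$; the structural-propagation rules $Rp_2$, $Rp_4$--$Rp_9$ and $Rp_{14}$ reduce, via the shapes of $Ra_4$--$Ra_{12}$ and $Ra_{16}$, to the induction hypothesis on the inconsistent component; and in the cases $Rp_{13}$, $Rp_{15}$ any immediate $\tau$-derivative $y$ inherits the property that all its stable $\tau$-descendants lie in $F$, whence $y\in F$ by LTS2 of part~(1). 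The ``hence'' clause is then immediate: along a path realising $p\sactions{\epsilon}q$, were some non-final state in $F$ the first claim would propagate inconsistency through every later $\tau$-step down to $q$, contradicting $q\notin F$; so the whole path avoids $F$, i.e.\ $p\fsactions{\epsilon}q$.

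Part (3) is the substantial one, and I would prove it by the context-closure method: showing that every pair in
\[
\mathcal S=\{(C_X\{p/X\},\,C_X\{q/X\}):p\RS q,\ C_X\text{ a context}\}
\]
satisfies $C_X\{p/X\}\RS C_X\{q/X\}$, by means of a decomposition lemma that expresses each stable $F$-free transition $\fsactions{a}$ of $C_X\{p/X\}$ through transitions of the context skeleton and of the copies of $p$, matches these through $p\RS q$, and recomposes a transition of $C_X\{q/X\}$, while verifying RS1--RS4 and the weak clause of Definition~\ref{D:RS}; parts (1) and (2) are used to pass between $\sactions{a}$ and the consistency-preserving form $\fsactions{a}$. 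I expect the genuine difficulty to sit in two places. The conjunction is the hardest: because the inconsistency of $t_1\wedge t_2$ is governed by ready-set mismatches and forward closure ($Rp_{10}$--$Rp_{13}$), verifying RS2 and RS4 through $\wedge$ requires exploiting the tightness of ready simulation (RS4 gives $\mathcal I(p_i)=\mathcal I(q_i)$ on the consistent side) to show that every source of inconsistency on the $p$-side is faithfully reflected on the $q$-side. Recursion is the other obstacle: here I would lean on the standing guardedness assumption to argue that each transition of $\langle Z|E\{p/X\}\rangle$ is produced by finitely many unfoldings, so that $\mathcal S$ can be closed under recursion without circular reasoning. Finally, the concluding consistency statement follows at once: if $C_X\{p/X\}\notin F$ then by LTS2 it has a stable $F$-free $\tau$-descendant, which $C_X\{q/X\}$ must match through $C_X\{p/X\}\RS C_X\{q/X\}$ by a path that is itself $F$-free, forcing $C_X\{q/X\}\notin F$.
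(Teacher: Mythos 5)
This theorem is not proved in the paper at all: it is explicitly \emph{recalled}, with the text attributing its three parts to Theorems~4.1 and 6.1 and Lemma~4.2 of reference~[5]. So there is no in-paper proof to compare yours against; what can be said is whether your blind reconstruction is a viable route to the cited facts. Parts (1) and (2) of your outline are sound in approach: rule inspection plus induction on well-founded proof trees (using the least-model/inversion reading of $F$) is exactly the kind of argument these claims need, and your derivation of the ``hence'' clause of (2) by forward propagation of inconsistency along a $\tau$-path is correct. Be aware, though, that the LTS1 check for $\Box$ and $\parallel_A$ is not pure ``propagation'': from ``every $\tau$-derivative $y_1\Box x_2$ or $x_1\Box y_2$ is in $F$'' you must invert $Rp_4$/$Rp_5$ to conclude that either some component is already in $F$ or all $\tau$-derivatives of a component are, before the induction hypothesis applies.

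Part (3) is where the substance lies, and your outline names the two hard points without resolving them. First, the relation $\mathcal S=\{(C_X\{p/X\},C_X\{q/X\}):p\RS q\}$ is not closed under derivatives: after an $a$-step the $p$-side becomes $C'_{X,\widetilde Y}\{p/X,\widetilde{p'}/\widetilde Y\}$ where the $p'_Y$ are \emph{stable derivatives} of $p$, matched on the $q$-side only up to $\SRS$, so the pairs you must track are multi-hole contexts filled with pointwise $\SRS$-related processes (or you must work up to $\SRS$, as the paper itself does in Definition~3.5 and Lemma~3.7 for the analogous Lemma~3.6). Second, verifying RS2 --- that $C\{p\}\notin F$ implies $C\{q\}\notin F$ --- is essentially the ``hence'' clause of (3) itself, so it cannot be quoted inside the simulation check without circularity; it needs a separate, prior argument by induction on the well-founded proof tree of $C\{q\}F$ (the shape of the paper's Lemma~3.4: every proof tree of $tF$ for $t$ in the candidate set has a proper subtree whose root is again in the set). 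Your sketch gestures at both difficulties but leaves them as labels rather than arguments, so as it stands it is an outline of the right strategy rather than a proof.
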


%

\section{Main results}
In [5], the following theorem has been obtained.

\begin{theorem}[Unique solution]
For any $p,q \notin F $ and $t_X$ where $X$ is strongly guarded and does not occur in the scope of any conjunction, if $p =_{RS} t_X\{p/X\}$ and $q =_{RS} t_X\{q/X\}$ then $p =_{RS} q$.
Moreover $\langle X | X=t_X \rangle$ is the unique consistent solution (modulo $=_{RS}$) of the equation $X =_{RS} t_X$ whenever  consistent solutions exist.
\end{theorem}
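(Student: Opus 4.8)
The plan is to treat the statement in the classical Milner style for unique solutions of guarded equations, but adapted to the two-level (weak/stable) ready simulation of Definition~\ref{D:RS} and to the consistency predicate $F$. Since $=_{RS}$ is the kernel of the preorder $\RS$ and the hypotheses on $p$ and $q$ are symmetric, it suffices to prove the single inclusion $p \RS q$; the reverse follows by exchanging the roles of $p$ and $q$. By the definition of $\RS$, establishing $p\RS q$ amounts to producing, for every stable derivative reached from $p$ along $\fsactions{\epsilon}$, a matching stable derivative of $q$ that is related by a stable ready simulation. The candidate relation I would use is
\[ \mathcal{R} = \{(C_X\{p/X\},\, C_X\{q/X\}) : C_X \text{ is a context in which } X \text{ is strongly guarded}\}, \]
whose stable instances I would verify to be a stable ready simulation, closed up to $\approx_{RS}$, together with the unfolding moves $p \leftrightarrow t_X\{p/X\}$ and $q \leftrightarrow t_X\{q/X\}$ supplied by the hypotheses $p =_{RS} t_X\{p/X\}$ and $q =_{RS} t_X\{q/X\}$.

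The heart of the argument is the verification of RS3 and RS4 for $\mathcal{R}$, where strong guardedness does the essential work. Whenever $C_X\{p/X\} \fsactions{a} u$, I would analyse the proof of the transition: because every hole sits under a visible prefix $a.(\cdot)$, and such a prefix subterm is itself stable irrespective of what is substituted, no substituted copy of $p$ can contribute a top-level move until its guarding prefix is consumed. Hence the transition is governed by $C_X$ alone and lands in a term of the shape $C'_X\{p/X\}$; reading the same proof with $q$ in place of $p$ yields $C_X\{q/X\} \fsactions{a} C'_X\{q/X\}$, and $(C'_X\{p/X\}, C'_X\{q/X\})$ is again in $\mathcal{R}$. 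The only moment a hole is exposed is when a guarding prefix fires; there I would rewrite the newly exposed $p$ as $t_X\{p/X\}$, which by strong guardedness pushes all holes back underneath prefixes and restores the required form, the mirror step on the $q$-side using $q =_{RS} t_X\{q/X\}$. Guardedness also ensures that each genuine action consumes a prefix, so the unfolding makes real progress and the up-to argument is well founded. Condition RS4 then follows because the immediate ready set of $C_X\{p/X\}$ is determined by $C_X$ alone (the guarded holes contribute nothing at top level), hence coincides with that of $C_X\{q/X\}$.

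The hypothesis that $X$ does not occur in the scope of any conjunction is exactly what I would need for RS2, and for the ready-set side conditions of the inconsistency rules $Rp_{10}$--$Rp_{13}$. Since $p$ and $q$ are only $=_{RS}$-related and not syntactically equal, replacing $p$ by $q$ inside a subterm $t_1 \wedge t_2$ could in principle change which of $Rp_{10}$, $Rp_{11}$ fires, and thus create or destroy an inconsistency; forbidding holes under $\wedge$ removes this danger, so that $C_X\{p/X\} \notin F$ transfers to $C_X\{q/X\} \notin F$ through the consistency-preservation part of Theorem~\ref{L:LLTS}(3). I expect this interaction between conjunction and the $F$-predicate to be the main obstacle, both because it is the one place where the guardedness argument by itself is insufficient and because it must be maintained through every unfolding step.

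For the ``moreover'' part I would first check that $\langle X|X=t_X\rangle$ is a solution: by rules $Ra_{16}$ and $Rp_{14}$ the process $\langle X|X=t_X\rangle$ and its unfolding $t_X\{\langle X|X=t_X\rangle/X\} \equiv \langle t_X|E\rangle$ have literally the same outgoing transitions to the same targets and the same $F$-status, whence $\langle X|X=t_X\rangle =_{RS} t_X\{\langle X|X=t_X\rangle/X\}$. Uniqueness modulo $=_{RS}$ among consistent solutions is then immediate from the first part, \emph{provided} $\langle X|X=t_X\rangle$ itself is consistent whenever some consistent solution $p$ exists. For this I would establish $p \RS \langle X|X=t_X\rangle$ by a guardedness-driven ready simulation of the same shape as above and then read off $\langle X|X=t_X\rangle \notin F$ from RS2; alternatively one argues directly by well-founded induction on a hypothetical $F$-proof tree for $\langle X|X=t_X\rangle$, using $Rp_{14}$, $Rp_{15}$ and $p =_{RS} t_X\{p/X\}$ to contradict $p \notin F$. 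I would single out this consistency transfer as the one genuinely delicate step in the second part.
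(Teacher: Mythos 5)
First, a point of reference: this theorem is not proved in the present paper at all --- it is quoted from [5] (``In [5], the following theorem has been obtained''), so there is no in-paper proof to compare yours against. Judged on its own terms, your Milner-style plan (a relation pairing $C_X\{p/X\}$ with $C_X\{q/X\}$ over strongly guarded contexts, closed under unfolding and up-to $\approx_{RS}$) is the natural reconstruction, and your identification of the conjunction/$F$ interaction as the crux is exactly right: Example~\ref{E:MULTI_SOLUTION} of the paper is built precisely to break that step.

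However, your treatment of that crux is circular as written. To verify (RS2) for a pair $(C_X\{p/X\}, C_X\{q/X\})$ you appeal to ``the consistency-preservation part of Theorem~\ref{L:LLTS}(3)'', but that clause transfers $C_X\{p/X\}\notin F$ to $C_X\{q/X\}\notin F$ only under the hypothesis $p\RS q$ --- which is the very conclusion you are trying to establish. The hypotheses actually available are $p=_{RS}t_X\{p/X\}$ and $q=_{RS}t_X\{q/X\}$, so the non-circular route is to pivot through the recursive process: prove $p=_{RS}\langle X|X=t_X\rangle$ for each consistent solution $p$ separately, using only $p=_{RS}t_X\{p/X\}$, and deduce $p=_{RS}q$ by transitivity. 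That in turn requires a standalone consistency-transfer lemma proved by well-founded induction on $F$-proof trees --- exactly the role Lemma~\ref{L:precongruence_F} plays for the greatest-solution result in this paper --- and it is in the direction $\langle X|X=t_X\rangle\RS p$ that the ``no occurrence under conjunction'' hypothesis must be spent; with conjunctions present the transfer genuinely fails, as the example shows. You invoke the well-founded-induction-on-$F$-trees technique only for the ``moreover'' part, but it is needed for the main claim as well; until it is supplied, the verification of (RS2) has a real gap.
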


The next example demonstrates that this theorem no longer holds if we drop the assumption that $X$ does not occur in the scope of any conjunction.

\begin{example}\label{E:MULTI_SOLUTION}
  Consider the equation $X=_{RS}t_X$ where $t_X \equiv (\langle Y |Y =a.Y\rangle \wedge a.X) \vee (\langle Z |Z =b.Z\rangle \wedge b.X)$.
  Clearly, $X$ is strongly guarded in $t_X$.
  We shall show that both $\langle X|X=a.X \rangle$ and $\langle X|X=b.X \rangle$ are consistent solutions.

 Let us first prove that $\langle X|X=a.X \rangle \notin F$.
 On the contrary, suppose that $\langle X|X=a.X \rangle \in F$. Then the last rule applied in the proof tree of 
 $\langle X| X = a.X \rangle F$ is either $\frac{a.\langle X| X = a.X  \rangle F}{\langle X| X =a.X \rangle F}$ or $\frac{\{rF:\langle X| X =a.X\rangle  \stackrel{\epsilon}{\Rightarrow}|r\}}{\langle X| X =a.X \rangle F}$.
  Then it is easy to see that every proof tree of $\langle X| X = a.X \rangle F$ has a proper subtree with root $\langle X| X = a.X \rangle F$, this contradicts the well-foundedness of proof tree, as desired.

  Secondly we show that $\langle X|X=a.X \rangle$ is a solution.
  Analysis similar to that above shows that $\langle Y|Y=a.Y \rangle \wedge a.\langle X|X=a.X\rangle \notin F$ and $\langle Y|Y=a.Y \rangle \wedge \langle X|X=a.X\rangle \notin F$.
  Then it is easy to check that the binary relation $\mathcal R$ given below is a stable ready simulation relation, where $P_v \triangleq \langle v | v = a.v \rangle$ with $v \in \{X,Y \}$.
 \[
    \mathcal{R}\triangleq\{(P_X,P_Y \wedge a.P_X),(P_X,P_Y \wedge P_X),
    (P_Y \wedge a.P_X,P_X),
    (P_Y   \wedge P_X,P_X)\}.\]
 Hence
 \[\langle X|X=a.X \rangle =_{RS}\langle Y|Y=a.Y \rangle \wedge a.\langle X|X=a.X\rangle.  \tag{\ref{E:MULTI_SOLUTION}.1}\]
  Moreover, $\langle Z |Z =b.Z\rangle \wedge b.\langle X|X=a.X \rangle \in F$ by Rules $Rp_{10}$, $Rp_{11}$ and $Rp_{12}$, which, together with (\ref{E:MULTI_SOLUTION}.1), implies  $\langle X|X=a.X\rangle =_{RS}t_X\{\langle X|X=a.X \rangle /X\}$.

  Summarily, $\langle X|X=a.X \rangle$ is a consistent solution.
  Similarly,  so is $\langle X|X=b.X \rangle$.
  However, $\langle X|X=a.X \rangle \not=_{RS}\langle X|X=b.X \rangle$. \qed
\end{example}


 For any equation $X=_{RS}t_X$, it is obvious that $\langle X|X=t_X \rangle$ is a solution of this equation.
 Moreover, the preceding example reveals that there may be  more than one (consistent) solution.
 Then it is natural to try to relate $\langle X|X=t_X \rangle$ to other solutions.
 As the main result of this note, we intend to show that, if $X$ is strongly guarded in $t_X$ then $\langle X|X=t_X \rangle$ is the greatest  solution of the equation $X=_{RS}t_X$.
 In other words, $\langle X |X=t_X \rangle$ captures the loosest solution whenever $X$ is strongly guarded in $t_X$.
  To this end, a few of results in [5] are recalled below.
  The following facts are confirmed by Lemmas~5.6-5.8 in [5].

  \begin{lemma}\label{L:ONE_ACTION_TAU}
   If $C_X\{p/X\} \stackrel{\alpha}{\rightarrow} r$ then

     (1) if $\alpha = \tau$ then
     either (1.1) there exists $C_X'$ such that $r\equiv C_X'\{p/X\}$ and $C_X\{q/X\} \stackrel{\tau}{\rightarrow}C_X'\{q/X\}$ for any $q$,
     or (1.2) there exist $C_{X,Z}'$ and $p'$ such that $p \action{\tau} p'$, $r \equiv C_{X,Z}'\{p/X,p'/Z\}$ and $C_X\{q/X\} \action{\tau} C_{X,Z}'\{q/X,q'/Z\}$ for any $q \action{\tau} q'$;

    (2) if $\alpha \in Act$ then there exits $C_{X,\widetilde{Y}}'$ such that
    (2.1) $r \equiv C_{X,\widetilde{Y}}'\{p/X,\widetilde{p_Y'}/\widetilde{Y}\}$ for some $\widetilde{p_Y'}$ with $p \action{\alpha} p_Y'$ for each $Y \in \widetilde{Y}$;
    (2.2) if $C_X\{q/X\}$ is stable and $q \action{\alpha} q_Y'$  for each $Y \in \widetilde{Y}$, then $C_X\{q/X\}\action{\alpha} C_{X,\widetilde{Y}}'\{q/X,\widetilde{q_Y'}/\widetilde{Y}\}$;

   (3) in particular, if $X$ is guarded in $C_X$ then there exists $B_X$ such that $r \equiv B_X\{p/X\}$ and  for any $q$, $C_X\{q/X\} \action{\alpha}B_X\{q/X\}$. \qed
  \end{lemma}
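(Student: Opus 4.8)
The plan is to establish (1)--(3) simultaneously by induction on the depth of the derivation of $C_X\{p/X\} \stackrel{\alpha}{\rightarrow} r$ in $Strip(\text{CLL}_R, M_{\text{CLL}_R})$, case-splitting on the outermost form of the context $C_X$ and, within each case, on the last operational rule $Ra_1$--$Ra_{16}$ that could yield the displayed transition. Inducting on the derivation rather than on the syntax of $C_X$ is essential, because the recursion construct is unfolded by $Ra_{16}$, so a naive structural induction on $C_X$ would not decrease; the measure is legitimate since every rule in Table~\ref{Ta:OPERATIONAL_RULES} has only finitely many positive premises, whence every transition derivation is finite.

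I would first dispose of the cases in which $X$ plays no active role. If $X$ does not occur freely in $C_X$, then $C_X\{p/X\} \equiv C_X\{q/X\}$ and the transition is literally the same for every argument, so it falls into (1.1), or into (2) with $\widetilde{Y}$ empty; this settles $0$, $\bot$, and every closed subterm (in particular recursions with no free $X$) uniformly. The genuine leaf case is $C_X \equiv X$, where the move is a move of $p$ itself: for $\alpha = \tau$ I take $C_{X,Z}' \equiv Z$ and $p' \equiv r$ to land in (1.2), and for $\alpha = a$ I take a single fresh $Y$ with $C_{X,\widetilde{Y}}' \equiv Y$ and $p_Y' \equiv r$ to land in (2); both replay clauses are then immediate.

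For the inductive step I would run through $\Box$, $\wedge$, $\vee$, $\parallel_A$ and the unfolding rule $Ra_{16}$, in each case matching the root transition to the responsible rule, applying the induction hypothesis to the sub-transitions of the immediate subcontexts, and reassembling the returned subcontexts into the required $C_X'$, $C_{X,Z}'$, or $C_{X,\widetilde{Y}}'$. Every $\tau$-producing rule ($Ra_4$, $Ra_5$, $Ra_7$--$Ra_{12}$) activates at most one component, which is exactly why a single fresh slot $Z$ suffices in (1.2); the interleaving action rules ($Ra_2$, $Ra_3$, $Ra_{13}$, $Ra_{14}$) carry a negative premise $x_i \not\stackrel{\tau}{\rightarrow}$, and to replay such a move for $q$ I would invoke the stability of $C_X\{q/X\}$ assumed in (2.2) together with $\tau$-purity from Theorem~\ref{L:LLTS}(1). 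The delicate rule is the synchronizing conjunction $Ra_6$ (and, identically, $Ra_{15}$): both subcontexts fire the same $a$ at once, so the hypothesis returns two tuples of fresh variables that must be concatenated into one $\widetilde{Y}$ and their derivatives $\widetilde{p_Y'}$ merged, all while recording that distinct occurrences of $X$ may select distinct $a$-derivatives of $p$. Carrying (2.1) and (2.2) through repeated and nested synchronization, and keeping the fresh variables disjoint, is where I expect the main difficulty to lie; it is precisely this phenomenon that forces part (2) to range over a tuple rather than a single variable.

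Part (3) I would derive as a specialization rather than a fresh induction. If $X$ is guarded in $C_X$, then every occurrence of $X$ lies under an action prefix, so $p$ cannot contribute an initial move: subcase (1.2) becomes impossible and the tuple of part (2) is forced empty, leaving in both the action and the $\tau$ case a witness depending on $C_X$ alone. Guardedness moreover makes the $\tau$-enabledness, hence the stability, at every position that matters independent of the substituted process, so the stability hypothesis and the restriction to stable $q$ may both be dropped, yielding a single context $B_X$ with $r \equiv B_X\{p/X\}$ and $C_X\{q/X\} \stackrel{\alpha}{\rightarrow} B_X\{q/X\}$ for every $q$.
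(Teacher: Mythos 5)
You should first note what you could not have known blind: the paper itself contains no proof of Lemma~\ref{L:ONE_ACTION_TAU} --- it is recalled from [5] (Lemmas~5.6--5.8) and stated without proof --- so your reconstruction can only be judged on its own merits against the expected standard argument. Its skeleton is the natural one and most of it is sound: inducting on the depth of the stripped derivation rather than on the syntax of $C_X$ is indeed forced by the unfolding rule $Ra_{16}$; the leaf case $C_X \equiv X$ and the case where $X$ is not free are handled correctly; the tuple-merging at the synchronising rules $Ra_6$ and $Ra_{15}$ is exactly why part (2) needs $\widetilde{Y}$; and in (2.2) the negative premises of $Ra_2$, $Ra_3$, $Ra_{13}$, $Ra_{14}$ are legitimately discharged by the assumed stability of $C_X\{q/X\}$, since stability propagates to the active subterms by the $\tau$-lifting rules $Ra_4$--$Ra_5$, $Ra_7$--$Ra_8$, $Ra_{11}$--$Ra_{12}$.

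The genuine gap is in part (3), at exactly the negative premises. To replay an $Ra_2$-step of $C_X^1\{p/X\} \Box\, C_X^2\{p/X\}$ for an \emph{arbitrary} (not necessarily stable) $q$, you must show $C_X^2\{q/X\} \not\action{\tau}$ knowing only $C_X^2\{p/X\} \not\action{\tau}$ and that $X$ is guarded in $C_X^2$. You assert that guardedness makes $\tau$-enabledness ``independent of the substituted process'', which is true, but your induction cannot deliver it: the contrapositive would require applying the lemma to a hypothetical derivation of $C_X^2\{q/X\} \action{\tau}$, which is not a subderivation of the one you are inducting on and has unrelated, unbounded depth, so the inductive hypothesis is unavailable. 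Nor can (3) be obtained ``as a specialization'' of (1)--(2), as you suggest at the end: the disjunction in (1) does not by itself exclude alternative (1.2) for guarded contexts, and (2.2) carries precisely the stability hypothesis that (3) must shed. The standard repair is a preliminary lemma with its own induction --- e.g.\ if $X$ is guarded in $D_X$ and $D_X\{q/X\} \action{\tau}$, then $D_X\{v/X\} \action{\tau}$ for every $v$, whence $\mathcal{I}(D_X\{q/X\})$ is instantiation-independent --- and this induction does go through, because every $\tau$-transition of $\text{CLL}_R$ is derived by rules ($Ra_1$, $Ra_4$, $Ra_5$, $Ra_7$--$Ra_{12}$, $Ra_{16}$) that carry no negative premises, so the problematic transfer never arises there. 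With that lemma in place, and with (3) carried through the main induction rather than derived afterwards, your argument closes.
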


%
%

The next property is asserted by Lemmas~5.6, 5.8 and 5.14 in [5].
  \begin{lemma}\label{L:MULTI_TAU_GF_STABLE}
    If $C_X\{p/X\}\sactions{\epsilon}r$ then there exist $C_{X,\widetilde{Y}}'$ and $p_Y'$ for $Y \in \widetilde{Y}$ such that
    \noindent (1) $p \sactions{\tau}p_Y'$ for each $Y \in \widetilde{Y}$ and $r\equiv C_{X,\widetilde{Y}}'\{p/X,\widetilde{p_Y'}/\widetilde{Y}\}$;
    \noindent (2) for any $q$ such that $q \action{\tau}$ iff $p\action{\tau}$, if $q \sactions{\tau}q_Y'$ for each $Y \in \widetilde{Y}$ then $C_X\{q/X\}\sactions{\epsilon}C_{X,\widetilde{Y}}'\{q/X,\widetilde{q_Y'}/\widetilde{Y}\}$;
    \noindent (3) in particular, if $X$ is strongly guarded in $C_X$ then so it is in $C_{X,\widetilde{Y}}'$, $\widetilde{Y}= \emptyset$ and $C_X\{q/X\}\sactions{\epsilon}C_{X,\widetilde{Y}}'\{q/X\}$ for any $q$. \qed
  \end{lemma}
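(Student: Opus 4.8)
The plan is to induct on the length $n$ of the $\tau$-sequence witnessing $C_X\{p/X\}\sactions{\epsilon}r$, using the single-step decomposition of Lemma~\ref{L:ONE_ACTION_TAU}(1) as the engine of the inductive step. To make the induction go through I would prove a slightly strengthened statement in which the un-fired seed copies of $p$ (kept at the occurrences of $X$) and the already-fired copies (each occupied by a $\tau$-descendant of $p$ and tracked by a fresh variable collected in a growing tuple $\widetilde Y$) are carried along simultaneously; clauses~(1) and~(2) for the original $C_X$ are the instance in which the sequence starts with $\widetilde Y=\emptyset$.

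For $n=0$ the term $C_X\{p/X\}$ is already stable, so I take $C_{X,\widetilde Y}'\equiv C_X$, $\widetilde Y=\emptyset$, $r\equiv C_X\{p/X\}$; clause~(1) is immediate, and clause~(2) holds because the hypothesis $q\action{\tau}$ iff $p\action{\tau}$ makes every active copy of $q$ stable exactly when the matching copy of $p$ is, while the context-structural $\tau$-offerings are identical on both sides, so $C_X\{q/X\}$ is stable too. For the step, write the first move as $C_X\{p/X\}\action{\tau}s_1$ and apply Lemma~\ref{L:ONE_ACTION_TAU}(1). In case~(1.1) the move is internal, $s_1\equiv C_X^{(1)}\{p/X\}$ with $C_X\{q/X\}\action{\tau}C_X^{(1)}\{q/X\}$ for every $q$ and the copies of $p$ untouched; I apply the induction hypothesis to the length-$(n{-}1)$ tail and prepend this step, both for the decomposition and for the $q$-simulation. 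In case~(1.2) one active copy fires, $p\action{\tau}p'$ and $s_1\equiv C_{X,Z}^{(1)}\{p/X,p'/Z\}$; here I enlarge $\widetilde Y$ by the fresh $Z$, record $p'$ (and, after the tail, its stable descendant) as a $\tau$-descendant of $p$, and recurse. Since $r$ is stable no tracked copy can still offer a $\tau$ at the end, which is precisely what delivers $p\sactions{\tau}p_Y'$ for every $Y\in\widetilde Y$ in clause~(1).

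The main obstacle is the bookkeeping across the dynamically growing tuple $\widetilde Y$. Lemma~\ref{L:ONE_ACTION_TAU} is stated for a single distinguished variable, so when a copy that has already fired (now sitting in a $Y$-hole) fires again I must re-apply the single-step lemma with that $Y$, rather than $X$, as the distinguished variable, and then update $\widetilde Y$, the stable descendants, and $C_{X,\widetilde Y}'$ coherently. Keeping the two outputs of Lemma~\ref{L:ONE_ACTION_TAU} synchronized through this reindexing, namely the exact shape of the reached term and the ``for any $q$'' simulating move, is the delicate point, and it is here that the iff-condition on the $\tau$-capability of $p$ and $q$ is repeatedly consumed in order to transfer stability of the active holes from the $p$-side to the $q$-side.

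Clause~(3) is then a clean corollary. If $X$ is strongly guarded in $C_X$, every occurrence of $X$ lies under a visible prefix $a.(\cdot)$, and since no $\tau$-rule in Table~\ref{Ta:OPERATIONAL_RULES} ever consumes a visible prefix, case~(1.2) can never arise along a pure $\tau$-sequence; hence only case~(1.1) is used, $\widetilde Y$ stays empty throughout, and a routine structural check (refining Lemma~\ref{L:ONE_ACTION_TAU}(3)) shows that strong guardedness is preserved by each internal $\tau$-move. This yields $r\equiv C_X'\{p/X\}$ with $X$ strongly guarded in $C_X'$ and $C_X\{q/X\}\sactions{\epsilon}C_X'\{q/X\}$ for every $q$, with no constraint relating $p$ and $q$.
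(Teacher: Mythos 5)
Your overall route is the natural one, and in fact it is the only route available from within this note: the paper does not prove Lemma~\ref{L:MULTI_TAU_GF_STABLE} at all --- it is imported from [5] (Lemmas~5.6, 5.8 and 5.14 there), and those citations are precisely the single-step decomposition (Lemma~\ref{L:ONE_ACTION_TAU}) plus its iteration along a $\tau$-sequence, i.e.\ the induction you set up. Your base case, your case split along Lemma~\ref{L:ONE_ACTION_TAU}(1.1)/(1.2), and your clause~(3) argument (strong guards are $a$-prefixes with $a\in Act$, no $\tau$-rule consumes such a prefix, so case~(1.2) never fires and $\widetilde{Y}$ stays empty) are all sound; note also that the stability transfers you invoke do follow from Lemma~\ref{L:ONE_ACTION_TAU}(1) applied with the roles of the two plugged processes exchanged, which is legitimate since that lemma quantifies over arbitrary $q$.

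The one place your sketch would break if executed literally is the strengthened induction hypothesis at the fired holes. You propose to ``re-apply the single-step lemma with that $Y$ as the distinguished variable'' and to recurse with the same statement; but the statement carries the side-condition $q\action{\tau}$ iff $p\action{\tau}$, and after a hole fires this condition is not inherited: the residual $p'$ on the $p$-side and the simulating one-step descendant $q^{1}$ of $q$ on the $q$-side need not agree on $\tau$-capability (one may be stable while the other is not), so a tail hypothesis demanding the iff at every hole is unsatisfiable in general. The correct strengthening imposes the iff only at the unfired $X$-holes and, at each fired hole, requires merely that the $q$-side content converge to \emph{some} stable descendant, with no step-for-step correlation to the $p$-side; this is viable because the $\tau$-rules of Table~\ref{Ta:OPERATIONAL_RULES} have no negative premises, so the runs at distinct holes can be scheduled independently inside the weak transition $\sactions{\epsilon}$, and it is exactly what licenses clause~(2)'s quantification over arbitrary $q_Y'$ with $q\sactions{\tau}q_Y'$. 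Your closing remark that the iff-condition is ``repeatedly consumed'' to transfer stability has the right instinct --- it is needed only at the stable endpoints, for the active $X$-holes --- but the induction hypothesis must be reformulated as above before the recursion goes through.
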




  \begin{lemma}\label{L:precongruence_F}
    If $X$ is strongly guarded in $t_X$ and $p\sqsubseteq_{RS}t_X\{p/X\}$ then for any context $C_Y$, $C_Y\{t_X\{p/X\}/Y\} \notin F$ implies $C_Y\{\langle X|X=t_X \rangle/Y\} \notin F$.
  \end{lemma}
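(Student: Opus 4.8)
The plan is to reduce the whole statement to the single inequality
\[ t_X\{p/X\} \sqsubseteq_{RS} \langle X|X=t_X\rangle. \]
Once this is available, the conclusion is immediate from Theorem~\ref{L:LLTS}(3): taking the two ready-similar processes there to be $t_X\{p/X\}$ and $\langle X|X=t_X\rangle$ and the context to be $C_Y$, the quoted consequence ``$p\sqsubseteq_{RS}q$ and $C_X\{p/X\}\notin F$ imply $C_X\{q/X\}\notin F$'' becomes exactly ``$C_Y\{t_X\{p/X\}/Y\}\notin F$ implies $C_Y\{\langle X|X=t_X\rangle/Y\}\notin F$''. So no further work on $C_Y$ is needed, and everything concentrates on the inequality above.

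To establish the inequality I would build a witnessing stable ready simulation coinductively (lifting it to $\sqsubseteq_{RS}$ in the sense of Definition~\ref{D:RS} by stabilising both sides through Lemma~\ref{L:MULTI_TAU_GF_STABLE}). Write $P\equiv\langle X|X=t_X\rangle$ and recall that, by $Ra_{16}$/$Rp_{14}$, $P$ and $t_X\{P/X\}$ have identical transitions and the same $F$-status, so $P\approx_{RS}t_X\{P/X\}$. I would then work with
\[ \mathcal R=\{(s,\,C_X\{P/X\}):X\text{ is strongly guarded in }C_X\text{ and }s\sqsubseteq_{RS}C_X\{p/X\}\}, \]
which contains $(t_X\{p/X\},t_X\{P/X\})$ (take $C_X\equiv t_X$ and use reflexivity) and hence, composing with $t_X\{P/X\}\approx_{RS}P$ and transitivity of $\sqsubseteq_{RS}$, yields the desired inequality. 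For the ready-set and transition clauses RS3/RS4 the argument is driven by Lemmas~\ref{L:ONE_ACTION_TAU} and~\ref{L:MULTI_TAU_GF_STABLE}: a move of $s$ is first mirrored by $C_X\{p/X\}$ using $s\sqsubseteq_{RS}C_X\{p/X\}$, and since $X$ is guarded in $C_X$ the guards fix the initial behaviour independently of the hole-fillings, so the same move is replayed on $C_X\{P/X\}$ through the common residual context, which also gives $\mathcal I(C_X\{p/X\})=\mathcal I(C_X\{P/X\})$. The one subtlety is that the residual context loses strong guardedness precisely at the holes lying under the consumed prefix; to restore the invariant I would unfold each newly exposed occurrence, replacing $p$ with $t_X\{p/X\}$ on the left (legitimate up to $\sqsubseteq_{RS}$ by the hypothesis $p\sqsubseteq_{RS}t_X\{p/X\}$ together with the monotonicity of Theorem~\ref{L:LLTS}(3)) and $P$ with $t_X\{P/X\}$ on the right (legitimate up to $\approx_{RS}$), so that the rebuilt context again strongly guards $X$ and the new pair lies in $\mathcal R$.

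The hard part is the consistency clause RS2, namely $s\notin F\Rightarrow C_X\{P/X\}\notin F$ for pairs of $\mathcal R$. Via $s\sqsubseteq_{RS}C_X\{p/X\}$ this reduces to showing $C_X\{p/X\}\notin F\Rightarrow C_X\{P/X\}\notin F$ whenever $X$ is strongly guarded in $C_X$, which I would attack by contraposition through a well-founded induction on the proof tree of $C_X\{P/X\}\,F$; it is exactly the well-foundedness stressed in Section~2 that makes this induction legitimate. Strong guardedness already rules out the troublesome base case in which $C_X\{P/X\}$ is a bare $P$ unfolded by $Rp_{14}$. The genuine difficulty is the global predicate rules $Rp_{12}$, $Rp_{13}$, $Rp_{15}$, whose premises quantify over all ($\tau$-reachable, stable) derivatives: transferring such a premise from $C_X\{P/X\}$ to $C_X\{p/X\}$ forces one to put the derivatives of the two sides into correspondence — again via Lemmas~\ref{L:ONE_ACTION_TAU} and~\ref{L:MULTI_TAU_GF_STABLE} — and, because a derivative step can expose holes under a prefix, to re-align $p$ with $t_X\{p/X\}$ and $P$ with $t_X\{P/X\}$ there, exactly as in the RS3 step. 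I expect this coupling between the proof-tree induction for consistency and the transition bookkeeping for simulation, mediated by the strong-guardedness invariant, to be where essentially all the effort lies.
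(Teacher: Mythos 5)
Your top-level reduction gets the paper's dependency order backwards, and as a result it is either circular or collapses into the paper's actual proof. The inequality $t_X\{p/X\} \sqsubseteq_{RS} \langle X|X=t_X\rangle$ that you want to establish first is, in substance, Lemma~\ref{L:GREATEST_SOLUTION}, and the paper proves \emph{that} lemma by invoking the present one twice (to get $u\notin F$ in the (Upto-1) case and $C_Y\{\langle X|X=t_X\rangle/Y\}\notin F$ in the (Upto-2) case). So the simulation cannot be taken as an available stepping stone here. You do notice this implicitly: your own account of the RS2 clause of the proposed relation $\mathcal R$ says it ``reduces to showing $C_X\{p/X\}\notin F \Rightarrow C_X\{P/X\}\notin F$ for strongly guarded $C_X$,'' to be proved by a well-founded induction on the proof tree of $C_X\{P/X\}\,F$ --- but that statement \emph{is} (a reformulation of) the lemma under review, and that induction \emph{is} the paper's entire proof. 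The simulation wrapper therefore buys nothing for this lemma; all the content sits in the kernel you defer to the last paragraph.

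That kernel, as you sketch it, is essentially the paper's argument: the paper defines $\Omega$ as the set of terms $B_Y\{\langle X|X=t_X\rangle/Y\}$ with $B_Y\{t_X\{p/X\}/Y\}\notin F$, shows that every proof tree of $tF$ with $t\in\Omega$ has a proper subtree rooted at $sF$ with $s\in\Omega$ (contradicting well-foundedness), and in the cases for the global rules $Rp_{12}$, $Rp_{13}$, $Rp_{15}$ transfers derivatives between the two sides via Lemmas~\ref{L:ONE_ACTION_TAU} and~\ref{L:MULTI_TAU_GF_STABLE}, re-aligning $p$ with $t_X\{p/X\}$ through Theorem~\ref{L:LLTS}(3) and the hypothesis $p\sqsubseteq_{RS}t_X\{p/X\}$, exactly as you describe. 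You also correctly anticipate the need for the ready-set identity $\mathcal{I}(D_Y\{t_X\{p/X\}/Y\})=\mathcal{I}(D_Y\{\langle X|X=t_X\rangle/Y\})$ (the paper's (\ref{L:precongruence_F}.1)), which is what makes the negative premises of $Rp_{10}$/$Rp_{11}$ and the strippped rules transfer. So the proposal is salvageable, but only by discarding its announced architecture and carrying out the proof-tree induction self-contained, which you leave at the level of a sketch; the concrete case analysis over the shape of $C_Y$ (recursion, conjunction, the $Rp_{12}$--$Rp_{15}$ premises) is where the work is, and none of it is done by the reduction you lead with.
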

  \begin{proof}
 By Lemma~\ref{L:ONE_ACTION_TAU}(3) and $Ra_{16}$, we have
$\mathcal{I}(\langle X|X=t_X \rangle) = \mathcal{I}(t_X\{p/X\})$.
  Then, by Lemma~\ref{L:ONE_ACTION_TAU}(1)(2), for any context $D_Y^*$, we get
\[ \mathcal{I}(D_Y^*\{t_X\{p/X\}/Y\})= \mathcal{I}(D_Y^*\{\langle X|X=t_X \rangle /Y\}). \tag{\ref{L:precongruence_F}.1}\]
    Set $\Omega \triangleq \{B_Y\{\langle X|X=t_X \rangle/Y\}:B_Y\text{ is a context and }
      B_Y\{t_X\{p/X\}/Y\} \notin F\}$.
   To complete the proof, it suffices to prove that $F \cap \Omega = \emptyset$.
   We intend to show that, for each $t \in \Omega$, any proof tree of $tF$ 
     has a proper subtree with root $sF$ for some $s \in \Omega$.
     Such statement implies $F \cap \Omega = \emptyset$. Otherwise, a contradiction arises due to the fact that proof trees are well-founded.
    Let $\mathcal T$ be any proof tree of $C_Y\{\langle X|X=t_X \rangle /Y\}F$ with $C_Y\{\langle X|X=t_X \rangle /Y\} \in \Omega$.
    Then
    \[C_Y\{t_X\{p/X\}/Y\}\notin F.\tag{\ref{L:precongruence_F}.2}\]
    The rest of the proof runs by distinguishing cases  based on $C_Y$.
    Here we handle only three non-trivial cases; the others 
    are left to the reader.

    \noindent \textbf{Case 1.} $C_Y \equiv Y$.
    Clearly, the last rule applied in $\mathcal T$ is either
 $\frac{\langle t_X|X=t_X \rangle F}{\langle X|X=t_X \rangle F}$ or  $\frac{\{rF:\langle X|X=t_X \rangle \sactions{\epsilon}r\}}{\langle X|X=t_X \rangle F}$.
  For the former, since $p\sqsubseteq_{RS}t_X\{p/X\}$, by (\ref{L:precongruence_F}.2) and Theorem~\ref{L:LLTS}(3), $t_X\{t_X\{p/X\}/X\} \notin F$.
    Hence $\mathcal T$ has a proper subtree with root $\langle t_X|X=t_X\rangle F$ and $\langle t_X|X=t_X\rangle \equiv t_X\{\langle X|X=t_X \rangle/X\} \in \Omega$, as desired.

   For the latter, we treat the non-trivial case where $\langle X|X=t_X \rangle \action{\tau}$.
    Since $t_X\{p/X\} \notin F$, by Theorem~\ref{L:LLTS}(1), $t_X\{p/X\}\fsactions{\epsilon}s$ for some $s$.
    For this transition,  by Lemma~\ref{L:MULTI_TAU_GF_STABLE}(3), there exists $t_X'$ such that $t_X\{\langle X|X=t_X\rangle/X\} \actions{\epsilon}| t_X'\{\langle X|X=t_X \rangle /X\}$ and $s \equiv t_X'\{p/X\}$.
   Then, by $Ra_{16}$ and $\langle X|X=t_X\rangle \action{\tau}$, we get
   $\langle X|X=t_X\rangle \sactions{\tau} t_X'\{\langle X|X=t_X \rangle/X\}$.
   So $\mathcal T$ has a proper subtree with root $t_X'\{\langle X|X=t_X \rangle/X\} F$.
    Moreover, by Theorem~\ref{L:LLTS}(3), we have $t_X'\{t_X\{p/X\}/X\} \notin F$ because of $s\equiv t_X'\{p/X\} \notin F$ and $p \RS t_X\{p/X\}$.
    Hence $t_X'\{\langle X|X=t_X \rangle /X\} \in \Omega$, as desired.

%
%
%

\noindent \textbf{Case 2.} $C_Y \equiv \langle Z|E \rangle$.
   Then the last rule applied in $\mathcal T$ is $\frac{\langle t_Z|E \rangle \{\langle X|X=t_X \rangle /Y\} F}{\langle Z|E \rangle \{\langle X|X=t_X \rangle /Y\} F}$ or $\frac{\{rF:\langle Z|E\rangle \{\langle X|X=t_X \rangle /Y\} \sactions{\epsilon}r\}}{\langle Z|E \rangle \{\langle X|X=t_X \rangle /Y\} F}$.
    For the former, we get $\langle t_Z|E \rangle \{t_X\{p/X\}/Y\} \notin F $ due to  $Rp_{14}$ and (\ref{L:precongruence_F}.2).
    So $\langle t_Z|E \rangle \{\langle X|X=t_X \rangle /Y\} \in \Omega$, as desired.

   For the latter, by (\ref{L:precongruence_F}.2) and Theorem~\ref{L:LLTS}(1), $C_Y\{t_X\{p/X\}/Y\}  \fsactions{\epsilon}s$ for some $s$.
    For this transition, there exist $C_{Y,\widetilde{W}}'$ and $\widetilde{s_W'}$ that satisfy clauses (1,2,3) in Lemma~\ref{L:MULTI_TAU_GF_STABLE}.
    Hence $s\equiv C_{Y,\widetilde{W}}'\{t_X\{p/X\}/Y,\widetilde{s_W'}/\widetilde{W}\}$ and for each $W\in \widetilde{W}$, $t_X\{p/X\} \sactions{\tau} s_W'$.
    For each such transition, say $t_X\{p/X\} \sactions{\tau}s_W'$, by Lemma~\ref{L:MULTI_TAU_GF_STABLE}(3) and \ref{L:ONE_ACTION_TAU}(3), $t_X\{\langle X|X=t_X \rangle /X\} \actions{\tau} |t_X'^W\{\langle X|X=t_X\rangle/X\}$ and $s_W' \equiv t_X'^W\{p/X\}$ for some $t_X'^W$.
    So, $\langle X|X=t_X \rangle   \sactions{\tau} t_X'^W\{\langle X|X=t_X\rangle/X\}$ for each $W \in \widetilde{W}$.
    Further, since $C_{Y,\widetilde{W}}'$  satisfies clause (2) in Lemma~\ref{L:MULTI_TAU_GF_STABLE}, by (\ref{L:precongruence_F}.1) with $D_Y^* \equiv Y$, we get $C_Y\{\langle X|X=t_X\rangle/Y\}  \sactions{\epsilon} u$, where
    $u \stackrel{\vartriangle}{\equiv}  C_{Y,\widetilde{W}}'\{\langle X|X=t_X \rangle/Y,\widetilde{t_X'^W}\{\langle X|X=t_X \rangle /X\}/\widetilde{W}\}$.
  Hence $\mathcal T$ has a proper subtree with root $uF$.
    Moreover, since $s\equiv C_{Y,\widetilde{W}}'\{t_X\{p/X\}/Y,\widetilde{t_X'^W\{p/X\}}/\widetilde{W}\} \notin F$ and $p \RS t_X\{p/X\}$, we obtain $C_{Y,\widetilde{W}}'\{t_X\{p/X\}/Y,\widetilde{t_X'^W}\{t_X\{p/X\}/X\}/\widetilde{W}\} \notin F$ due to   Theorem~\ref{L:LLTS}(3). Then $u \in \Omega$, as desired.

 \noindent   \textbf{Case 3.} $C_Y \equiv B_Y \wedge D_Y$.
 We distinguish four cases based on the last rule applied in $\mathcal T$.
 Since rules for $\wedge$ are symmetric w.r.t its operands, we consider only one of two symmetric rules.

 \noindent \textbf{Case 3.1.} $\frac{B_Y\{\langle X|X=t_X \rangle/Y\}F}{C_Y\{\langle X|X=t_X \rangle/Y\}F}$.
By (\ref{L:precongruence_F}.2) and  $Rp_{8}$, $B_Y\{t_X\{p/X\}/Y\} \notin F$ and hence $B_Y\{\langle X|X=t_X \rangle/Y\} \in \Omega$.

 \noindent \textbf{Case 3.2.} $\frac{B_Y\{\langle X|X=t_X \rangle/Y\} \action{a}r}{C_Y\{\langle X|X=t_X \rangle/Y\}F}$ with $D_Y\{\langle X|X=t_X \rangle/Y\} \not\action{a}$ and $C_Y\{\langle X|X=t_X \rangle/Y\}\not\action{\tau}$.
 Then, by (\ref{L:precongruence_F}.1), we get $B_Y\{t_X\{p/X\}/Y\} \action{a}$, $D_Y\{t_X\{p/X\}/Y\} \not\action{a}$ and $C_Y\{t_X\{p/X\}/Y\} \not\action{\tau}$.
  Thus $C_Y\{t_X\{p/X\}/Y\} \in F$ follows by $Rp_{10}$ and $Rp_{11}$, which contradicts (\ref{L:precongruence_F}.2). Hence this case is impossible.

 \noindent \textbf{Case 3.3.} $\frac{\{rF:C_Y\{\langle X|X=t_X \rangle/Y\} \sactions{\epsilon}r\}}{C_Y\{\langle X|X=t_X \rangle/Y\}F}$.
 Similar to the second alternative in the proof of Case~2, omitted.

 \noindent \textbf{Case 3.4.} $\frac{C_Y\{\langle X|X=t_X \rangle/Y\} \action{\alpha}r', \{rF:C_Y\{\langle X|X=t_X \rangle/Y\} \action{\alpha}r\}}{C_Y\{\langle X|X=t_X \rangle/Y\}F}$.
 Then, by (\ref{L:precongruence_F}.1) with $D_Y^* \equiv C_Y$, (\ref{L:precongruence_F}.2) and Theorem~\ref{L:LLTS}(1), there exists $s$ such that
 \[C_Y\{t_X\{p/X\}/Y\} \faction{\alpha}s. \tag{\ref{L:precongruence_F}.3}\]
 In the following, we consider two cases based on $\alpha$.

 \noindent \textbf{Case~3.4.1.} $\alpha=\tau$.
    For the transition in (\ref{L:precongruence_F}.3), either (1.1) or (1.2) in Lemma~\ref{L:ONE_ACTION_TAU} holds.
    For the former, there exists $C_Y'$ such that $C_Y\{\langle X|X=t_X\rangle/Y\} \action{\tau} C_Y'\{\langle X|X=t_X\rangle/Y\}$ and $s\equiv C_Y'\{t_X\{p/X\}/Y\}$.
    Then $\mathcal T$ has a proper subtree with root $C_Y'\{\langle X|X=t_X\rangle/Y\}F$ and $C_Y'\{\langle X|X=t_X\rangle/Y\} \in \Omega$ due to $s \equiv C_Y'\{t_X\{p/X\}/Y\} \notin F$.

    Next we handle the latter where (1.2) in Lemma~\ref{L:ONE_ACTION_TAU} holds.
    In such situation, $s\equiv C_{Y,Z}'\{t_X\{p/X\}/Y,s'/Z\}$ for some $s',C_{Y,Z}'$ such that $t_X\{p/X\} \action{\tau} s'$ and
    \[C_Y\{q/Y\} \action{\tau} C_{Y,Z}'\{q/Y,q'/Z\} \text{ for any }q \action{\tau}q'.\tag{\ref{L:precongruence_F}.4}\]
    For $t_X\{p/X\} \action{\tau} s'$, by Lemma~\ref{L:ONE_ACTION_TAU}(3), there exists $t_X'$ such that $s'\equiv t_X'\{p/X\}$ and $t_X\{\langle X|X=t_X\rangle/X\} \action{\tau} t_X'\{\langle X|X=t_X\rangle/X\}$.
    Then by $Ra_{16}$, $\langle X|X=t_X\rangle \action{\tau} t_X'\{\langle X|X=t_X\rangle/X\}$.
    Further, it follows from (\ref{L:precongruence_F}.4) that $C_Y\{\langle X|X=t_X \rangle /Y\}\action{\tau} u$, where $u \stackrel{\vartriangle}{\equiv} C_{Y,Z}'\{\langle X|X=t_X \rangle /Y,t_X'\{\langle X|X=t_X\rangle/X\}/Z\}$.
    Thus $\mathcal T$ has a proper subtree with root $uF$.
    Moreover, by Theorem~\ref{L:LLTS}(3) and $ s \notin F$, we get $C_{Y,Z}'\{t_X\{p/X\}/Y,t_X'\{t_X\{p/X\}/X\}/Z\} \notin F$, which implies  $u\in \Omega$.

 \noindent \textbf{Case~3.4.2.} $\alpha \in Act$.
 For the transition in (\ref{L:precongruence_F}.3),  there exists $C_{Y,\widetilde{Z}}'$ that satisfies (2.1) and (2.2) in Lemma~\ref{L:ONE_ACTION_TAU}(2).
 Thus $s \equiv C_{Y,\widetilde{Z}}'\{t_X\{p/X\}/Y,\widetilde{s_Z'}/\widetilde{Z}\}$ for some  $\widetilde{s_Z'}$ with $t_X\{p/X\} \action{\alpha} s_Z'$ for any $Z \in \widetilde{Z}$.
 For each such transition, say $t_X\{p/X\} \action{\alpha} s_Z'$, by Lemma~\ref{L:ONE_ACTION_TAU}(3), $s_Z' \equiv t_X'^Z\{p/X\}$ and $t_X\{\langle X|X=t_X\rangle/X\} \action{\alpha} t_X'^Z\{\langle X|X=t_X \rangle/X\}$ for some $t_X'^Z$.
 Then, by $Ra_{16}$, $\langle X|X=t_X\rangle \action{\alpha} t_X'^Z\{\langle X|X=t_X \rangle/X\}$ for $Z \in \widetilde{Z}$.
 Further, since $C_{Y,\widetilde{Z}}'$ satisfies (2.2) in Lemma~\ref{L:ONE_ACTION_TAU}, by (\ref{L:precongruence_F}.1) with $D_Y^* \equiv C_Y$, we get
   $ C_Y\{\langle X|X=t_X \rangle/Y\} \action{\alpha}u$, where
   $u\stackrel{\vartriangle}{\equiv} C_{Y,\widetilde{Z}}'\{\langle X|X=t_X \rangle /Y, \widetilde{t_X'^Z}\{\langle X|X=t_X \rangle /X \}/\widetilde{Z}\}.$
 Thus $\mathcal T$ has a proper subtree with root $uF$.
 Moreover, by $s \notin F$ and Theorem~\ref{L:LLTS}(3), we get $C_{Y,\widetilde{Z}}'\{t_X\{p/X\}/Y,\widetilde{t_X'^Z}\{t_X\{p/X\}/X\}/\widetilde{Z}\} \notin F$, and hence $u \in \Omega$.
  \end{proof}

Having disposed of this preliminary step, we can now give a crucial result. Let us first recall a notion of up-to $\underset{\thicksim}{\sqsubset}_{RS}$, which depends on an equivalent formulation of $\RS$ provided by van Glabbeek [3].

%
\begin{mydefn}[{[5]}]\label{D:ALT_RS_UPTO}
A relation ${\mathcal R} \subseteq T(\Sigma_{\text{CLL}_R})\times T(\Sigma_{\text{CLL}_R})$ is a ready simulation relation up to $\underset{\thicksim}{\sqsubset}_{RS}$ whenever, for any $(p,q) \in {\mathcal R}$ and $a \in Act $,

\noindent \textbf{(Upto-1)} $p \fsactions{\epsilon}p'$ implies $\exists q'.q \fsactions{\epsilon}q'$ and $p' \SRS {\mathcal R}\SRS q'$;

\noindent \textbf{(Upto-2)} $p \fsactions{a}p'$ and $p,q$ stable implies $\exists q'.q \fsactions{a}q'$ and $p'\SRS{\mathcal R}\SRS q'$;

\noindent \textbf{(Upto-3)} $p\notin F $ and $p,q$ stable implies ${\mathcal I}(p)={\mathcal I}(q)$.
\end{mydefn}

 This notion provides a sound up-to technique, that is, if $\mathcal R$ is a ready simulation relation up to $\SRS$, then ${\mathcal R} \subseteq \RS$ [5].
The next lemma asserts that $\langle X|X=t_X\rangle$ is the largest solution of the inequation $X\RS t_X$.
\begin{lemma}\label{L:GREATEST_SOLUTION}
 If $p \RS t_X\{p/X\}$ then $p \RS \langle X|X=t_X \rangle$ whenever $X$ is strongly guarded in $t_X$.
\end{lemma}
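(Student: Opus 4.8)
The plan is to establish $p \RS \langle X|X=t_X \rangle$ by the up-to technique of Definition~\ref{D:ALT_RS_UPTO}. First I would dispose of the degenerate case: if $p \in F$ then there is no $p'$ with $p \fsactions{\epsilon} p'$, so $p \RS q$ holds vacuously for every $q$; hence I may assume $p \notin F$. Writing $P \triangleq \langle X|X=t_X \rangle$, I would introduce the candidate relation
\[ \mathcal R \triangleq \{(C_X\{p/X\},\, C_X\{P/X\}) : X \text{ is strongly guarded in } C_X \text{ and } C_X\{p/X\}\notin F\}, \]
prove that $\mathcal R$ is a ready simulation up to $\SRS$, and conclude $\mathcal R \subseteq \RS$. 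Note that $(p,P)$ itself is \emph{not} in $\mathcal R$, since $X$ is not strongly guarded in the context $X$; the final step therefore derives $p \RS P$ directly from the definition of $\RS$ together with the hypothesis $p \RS t_X\{p/X\}$.

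The conditions (Upto-3) and (Upto-1) form the routine part. For (Upto-3), if $C_X\{p/X\}$ is stable and consistent, then strong guardedness of $X$ in $C_X$ places every hole under a visible prefix, so by Lemma~\ref{L:ONE_ACTION_TAU}(3) the enabled actions of $C_X\{r/X\}$ are independent of $r$; hence $\mathcal I(C_X\{p/X\}) = \mathcal I(C_X\{P/X\})$. For (Upto-1), a transition $C_X\{p/X\} \fsactions{\epsilon} r$ is $\tau$-only, so by Lemma~\ref{L:MULTI_TAU_GF_STABLE}(3) it takes the form $r \equiv C'\{p/X\}$ with $X$ again strongly guarded in $C'$, matched by $C_X\{P/X\} \sactions{\epsilon} C'\{P/X\}$. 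To see that $(C'\{p/X\},C'\{P/X\}) \in \mathcal R$, I would verify consistency of $C'\{P/X\}$ through the chain $C'\{p/X\}\notin F \Rightarrow C'\{t_X\{p/X\}/X\}\notin F$ (by $p\RS t_X\{p/X\}$ and Theorem~\ref{L:LLTS}(3)) $\Rightarrow C'\{P/X\}\notin F$ (by Lemma~\ref{L:precongruence_F}); the required $r \SRS \mathcal R \SRS C'\{P/X\}$ is then immediate by reflexivity of $\SRS$ on stable consistent states.

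The delicate case, which I expect to be the main obstacle, is (Upto-2). Here a transition $C_X\{p/X\} \fsactions{a} r$ first fires the visible action, reaching $B_X\{p/X\}$ with $C_X\{P/X\} \action{a} B_X\{P/X\}$ by Lemma~\ref{L:ONE_ACTION_TAU}(3); but firing the guard \emph{exposes} the hole, so $X$ is in general no longer strongly guarded in $B_X$, and the remaining steps $B_X\{p/X\} \fsactions{\epsilon} r$ genuinely involve the behavior of $p$. The hard part is to re-establish strong guardedness. I would apply precongruence (Theorem~\ref{L:LLTS}(3)) to replace $p$ by $t_X\{p/X\}$, giving $B_X\{p/X\} \RS B_X\{t_X\{p/X\}/X\}$, and observe that in the composed context $C'' \triangleq B_X\{t_X/X\}$ every occurrence of $X$ now lies inside a copy of $t_X$ and is therefore strongly guarded again. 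Transferring the $\tau$-closure along $B_X\{p/X\} \RS C''\{p/X\}$ yields $C''\{p/X\} \fsactions{\epsilon} r_1$ with $r \SRS r_1$; since $X$ is strongly guarded in $C''$, Lemma~\ref{L:MULTI_TAU_GF_STABLE}(3) rewrites $r_1 \equiv C'''\{p/X\}$ (with $X$ strongly guarded in $C'''$) and supplies $C''\{P/X\} \sactions{\epsilon} C'''\{P/X\}$. Finally, rule $Ra_{16}$ makes $P$ and $t_X\{P/X\}$ transition-identical, so $B_X\{P/X\}$ and $C''\{P/X\} = B_X\{t_X\{P/X\}/X\}$ are interchangeable and the $P$-side assembles into $C_X\{P/X\} \fsactions{a} r''$ with $C'''\{P/X\} \SRS r''$. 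Combining, $r \SRS r_1 \equiv C'''\{p/X\} \mathrel{\mathcal R} C'''\{P/X\} \SRS r''$, which is exactly $r \SRS \mathcal R \SRS r''$; membership of the middle pair in $\mathcal R$ is secured by the consistency chain used in (Upto-1).

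With $\mathcal R \subseteq \RS$ in hand, I would conclude by proving $p \RS P$ from the definition of $\RS$. Given any stable consistent $p'$ with $p \fsactions{\epsilon} p'$, the hypothesis $p \RS t_X\{p/X\}$ supplies $w$ with $t_X\{p/X\} \fsactions{\epsilon} w$ and $p' \SRS w$; as $X$ is strongly guarded in $t_X$, Lemma~\ref{L:MULTI_TAU_GF_STABLE}(3) writes $w \equiv C'\{p/X\}$ with $X$ strongly guarded in $C'$ and, via $Ra_{16}$, $P \fsactions{\epsilon} C'\{P/X\}$, the consistency of $C'\{P/X\}$ being obtained as before. Since $(C'\{p/X\},C'\{P/X\}) \in \mathcal R \subseteq \RS$ and both terms are stable and consistent, we get $C'\{p/X\} \SRS C'\{P/X\}$, and transitivity of $\SRS$ then yields $p' \SRS C'\{P/X\}$, with $C'\{P/X\}$ a stable consistent derivative of $P$. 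As this is precisely the clause required by the definition of $\RS$, we obtain $p \RS P$.
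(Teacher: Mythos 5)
Your overall strategy coincides with the paper's: both proofs build a context-closure relation pairing $p$-instances with $\langle X|X=t_X\rangle$-instances, show it is a ready simulation up to $\SRS$ via Lemmas~\ref{L:ONE_ACTION_TAU}, \ref{L:MULTI_TAU_GF_STABLE} and \ref{L:precongruence_F}, and use $p\RS t_X\{p/X\}$ together with Theorem~\ref{L:LLTS}(3) to ``unfold'' $p$ into $t_X\{p/X\}$ whenever guardedness is lost. The bookkeeping differs in one substantive way. The paper's relation is $\{(B_Y\{t_X\{p/X\}/Y\},B_Y\{\langle X|X=t_X\rangle/Y\}):B_Y\ \text{any context}\}$: the left component always carries a full copy of $t_X\{p/X\}$, so after a visible action the residue automatically lands back in the relation (the derivative $t_X'^Z\{p/X\}$ is promoted to $t_X'^Z\{t_X\{p/X\}/X\}$ by precongruence), and on the right the matching move of $\langle X|X=t_X\rangle$ is read off directly from $Ra_{16}$ plus clause (2.2) of Lemma~\ref{L:ONE_ACTION_TAU}; no guardedness or consistency side conditions are needed in the relation, and the conclusion $p\RS\langle X|X=t_X\rangle$ falls out by transitivity through $t_X\{p/X\}$ rather than by a separate top-level argument. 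Your relation instead plugs in $p$ itself and maintains strong guardedness as an invariant, which forces the extra restoration step in (Upto-2).

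That restoration step contains the one point you should not leave as an aside: the claim that $B_X\{P/X\}$ and $C''\{P/X\}\equiv B_X\{t_X\{P/X\}/X\}$ are ``interchangeable'' because $Ra_{16}$ makes $P$ and $t_X\{P/X\}$ transition-identical. No lemma you cite yields this for an arbitrary context $B_X$: the targets of transitions of the two composites differ wherever an unreduced copy of the hole survives, so they are not literally the same, and ``same initial transitions of the components'' does not by itself transfer $\fsactions{\epsilon}$-moves of one composite to the other. The clean repair is to invoke the unfolding law $P=_{RS}t_X\{P/X\}$ (which the paper treats as immediate) and Theorem~\ref{L:LLTS}(3) to get $B_X\{P/X\}=_{RS}C''\{P/X\}$, then transfer $C''\{P/X\}\fsactions{\epsilon}C'''\{P/X\}$ to $B_X\{P/X\}\fsactions{\epsilon}r''$ with $C'''\{P/X\}\SRS r''$ --- your chain already budgets for this trailing $\SRS$, so the argument closes. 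You should also note explicitly that the intermediate states $C_X\{P/X\}$ and $B_X\{P/X\}$ are consistent (via Lemma~\ref{L:precongruence_F} from $C_X\{p/X\},B_X\{p/X\}\notin F$) so that the assembled run is an $\fsactions{a}$ and not merely an $\sactions{a}$, and that the upgrade from $\sactions{\epsilon}$ to $\fsactions{\epsilon}$ in (Upto-1) uses Theorem~\ref{L:LLTS}(2). With these points supplied the proof is correct.
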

\begin{proof}
 By Lemma~\ref{L:ONE_ACTION_TAU}(3) and $Ra_{16}$, we get $\mathcal{I}(t_X\{p/X\})=\mathcal{I}(\langle X|X=t_X \rangle)$.
  Then, by  Lemma~\ref{L:ONE_ACTION_TAU}(1)(2),  it follows that, for any $D_Y$,
 \[ \mathcal{I}(D_Y\{t_X\{p/X\}/Y\})=\mathcal{I}(D_Y\{\langle X|X=t_X \rangle /Y\}).  \tag{\ref{L:GREATEST_SOLUTION}.1} \]
To complete the proof, it suffices to show that $t_X\{p/X\} \RS \langle X|X=t_X \rangle$. Set
$
   {\mathcal R} \triangleq \{(B_Y\{t_X\{p/X\}/Y\},B_Y\{\langle X|X=t_X \rangle/Y\}):
     B_Y\text{ is a context}\}$.
 We intend to prove that $\mathcal R$ is a ready simulation relation up to $\SRS$.

  Let $(C_Y\{t_X\{p/X\}/Y\},C_Y\{\langle X|X=t_X \rangle/Y\}) \in \mathcal R$.
   We shall check that such pair satisfies (Upto-1,2,3).
   For (Upto-3), it is obvious due to (\ref{L:GREATEST_SOLUTION}.1).

  \textbf{(Upto-1)} Assume $C_Y\{t_X\{p/X\}/Y\} \fsactions{\epsilon} s$.
   So there exist $C_{Y,\widetilde{Z}}'$ and $\widetilde{s_Z'}$ that satisfy clauses (1)-(3) in Lemma~\ref{L:MULTI_TAU_GF_STABLE}.
    Then $t_X\{p/X\} \sactions{\tau} s_Z'$ for $Z \in \widetilde{Z}$ and $s \equiv C_{Y,\widetilde{Z}}'\{t_X\{p/X\}/Y,\widetilde{s_Z'}/\widetilde{Z}\}$.
 For each such transition, say $t_X\{p/X\} \sactions{\tau} s_Z'$, by Lemma~\ref{L:MULTI_TAU_GF_STABLE}(3) and \ref{L:ONE_ACTION_TAU}(3), there exists $t_X'^Z$  with strongly guarded $X$ such that $s_Z' \equiv t_X'^Z\{p/X\}$ and $t_X\{\langle X|X=t_X\rangle /X\} \sactions{\tau} t_X'^Z\{\langle X|X=t_X \rangle /X\}$.
  So, by $Ra_{16}$, $\langle X|X=t_X\rangle \sactions{\tau} t_X'^Z\{\langle X|X=t_X \rangle /X\}$ for $Z \in \widetilde{Z}$.
  Since $C_{Y,\widetilde{Z}}'$ satisfies clause (2) in Lemma~\ref{L:MULTI_TAU_GF_STABLE}, by (\ref{L:GREATEST_SOLUTION}.1) with $D_Y \equiv Y$, we get
  $C_Y\{\langle X|X=t_X \rangle /Y\} \sactions{\epsilon}u$,
   where  $u \stackrel{\vartriangle}{\equiv} C_{Y,\widetilde{Z}}'\{\langle X|X=t_X \rangle /Y, \widetilde{t_X'^Z}\{\langle X|X=t_X \rangle /X\} /\widetilde{Z}\}$.
Put $w \stackrel{\vartriangle}{\equiv} C_{Y,\widetilde{Z}}'\{t_X\{p/X\}/Y, \widetilde{t_X'^Z}\{t_X\{p/X\} /X\} /\widetilde{Z}\}$.
  Since  $p \RS t_X\{p/X\}$, by Theorem~\ref{L:LLTS}(3) and  $s \equiv C_{Y,\widetilde{Z}}'\{t_X\{p/X\}/Y, \widetilde{t_X'^Z\{p /X\}} /\widetilde{Z}\} \notin F$, we get
  $s \RS w $
  and hence $ w \notin F$.
  So $u \notin F$ by Lemma~\ref{L:precongruence_F}.
  Since $C_Y\{\langle X|X=t_X \rangle /Y\} \sactions{\epsilon}u$, by Theorem~\ref{L:LLTS}(2), we get $C_Y\{\langle X|X=t_X\rangle/Y\} \fsactions{\epsilon} u$.
  Moreover, since $X$ is strongly guarded in $t_X'^Z$ for $Z \in \widetilde{Z}$, $X$ is strongly guarded in $C_{Y,\widetilde{Z}}'\{t_X\{p/X\}/Y, \widetilde{t_X'^Z}/\widetilde{Z}\}$.
  So $w \equiv C_{Y,\widetilde{Z}}'\{t_X\{p/X\}/Y, \widetilde{t_X'^Z} /\widetilde{Z}\}\{t_X \{p/X\}/X\} \not\action{\tau}$ due to Lemma~\ref{L:ONE_ACTION_TAU}(3) and $s \equiv C_{Y,\widetilde{Z}}'\{t_X\{p/X\}/Y, \widetilde{t_X'^Z} /\widetilde{Z}\}\{p/X\} \not\action{\tau}$.
  Thus $s \SRS w{\mathcal R} u$  because of $s \RS w $.

  \textbf{(Upto-2)} Let $C_Y\{t_X\{p/X\}/Y\}$ and $C_Y\{\langle X|X=t_X\rangle /Y\}$ be stable, and let $C_Y\{t_X\{p/X\}/Y\} \fsactions{a} s$.
  Then $C_Y\{t_X\{p/X\}/Y\} \faction{a} r\fsactions{\epsilon} s$ for some $r$.
  For the transition $C_Y\{t_X\{p/X\}/Y\} \action{a} r$, there exists $C_{Y,\widetilde{Z}}'$ that satisfies clauses (2.1) and (2.2) in Lemma~\ref{L:ONE_ACTION_TAU}.
   Then $r \equiv C_{Y,\widetilde{Z}}'\{t_X\{p/X\}/Y,\widetilde{r_Z'}/\widetilde{Z}\}$ for some $\widetilde{r_Z'}$  such that $t_X\{p/X\} \action{a} r_Z'$ for $Z \in \widetilde{Z}$.
   For each such transition, say $t_X\{p/X\} \action{a} r_Z'$, 
    by Lemma~\ref{L:ONE_ACTION_TAU}(3), $r_Z' \equiv t_X'^Z\{p/X\}$ and $t_X\{\langle X|X=t_X\rangle /X\} \action{a} t_X'^Z\{\langle X|X=t_X \rangle /X\}$ for some $t_X'^Z$.
  Then, by $Ra_{16}$, $ \langle X|X=t_X\rangle   \action{a} t_X'^Z\{\langle X|X=t_X \rangle /X\}$ for $Z \in \widetilde{Z}$.
  Further, since 
  $C_{Y,\widetilde{Z}}'$  satisfies clause (2.2) in Lemma~\ref{L:ONE_ACTION_TAU}, we get $C_Y\{\langle X|X=t_X \rangle/Y\} \action{a}v$, where
 $
  v \stackrel{\vartriangle}{\equiv} C_{Y,\widetilde{Z}}'\{\langle X|X=t_X \rangle /Y, \widetilde{t_X'^Z}\{\langle X|X=t_X \rangle /X\}/\widetilde{Z} \}$.
  Let $u \stackrel{\vartriangle}{\equiv} C_{Y,\widetilde{Z}}'\{t_X\{p/X\} /Y, \widetilde{t_X'^Z}\{t_X\{p/X\} /X\}/\widetilde{Z} \}$.
  By Theorem~\ref{L:LLTS}(3), we have
 $r \equiv C_{Y,\widetilde{Z}}'\{t_X\{p/X\} /Y, \widetilde{t_X'^Z\{p /X\}}/\widetilde{Z} \} \RS u$ because of $p \RS t_X\{p/X\}$.
  Further, it follows from $r \fsactions{\epsilon} s$ that $u \fsactions{\epsilon} t$ and $s \SRS t$ for some $t$.
  Since $u {\mathcal R} v$, by (Upto-1), there exits $t'$ such that $v \fsactions{\epsilon} t'$ and $t \SRS {\mathcal R} \SRS t'$.
  Moreover,  by Lemma~\ref{L:precongruence_F}, $C_Y\{\langle X|X=t_X \rangle /Y\}\notin F$  due to  $C_Y\{t_X\{p/X\} /Y\}\notin F$.
  Hence $C_Y\{\langle X|X=t_X \rangle /Y\}\faction{a}v \fsactions{\epsilon} t'$ and $s \SRS t \SRS {\mathcal R} \SRS t'$, as desired.
\end{proof}

As a consequence of
Lemma~\ref{L:GREATEST_SOLUTION} and Theorem~\ref{L:LLTS}(3), our main result is arrived, which characterizes $\langle X|X=t_X \rangle$ as the greatest solution of $X=_{RS}t_X$.
\begin{theorem}\label{T:GREATEST_SOLUTION}
For any $t_X$ with strongly guarded $X$,
  $\langle X|X=t_X\rangle$ is the greatest solution (w.r.t $\RS$) of $X=_{RS}t_X$; moreover $\langle X|X=t_X\rangle$ is consistent iff consistent solutions exit.
\end{theorem}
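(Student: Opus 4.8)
The plan is to obtain Theorem~\ref{T:GREATEST_SOLUTION} as a direct corollary of the two facts flagged immediately before its statement, namely Lemma~\ref{L:GREATEST_SOLUTION} and Theorem~\ref{L:LLTS}(3), together with the observation (already recorded in the text) that $\langle X|X=t_X\rangle$ is itself a solution of $X=_{RS}t_X$. Recall that a solution of $X=_{RS}t_X$ is a process $p$ with $p=_{RS}t_X\{p/X\}$, and a consistent solution is additionally required to satisfy $p\notin F$. Three things must be established: that $\langle X|X=t_X\rangle$ is a solution; that it dominates every other solution under $\RS$; and the consistency equivalence.

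First I would record that $\langle X|X=t_X\rangle$ is a solution. The only rules producing transitions and the $F$-predicate for $\langle X|E\rangle$ are $Ra_{16}$, $Rp_{14}$ and $Rp_{15}$, which tie $\langle X|E\rangle$ to $\langle t_X|E\rangle\equiv t_X\{\langle X|X=t_X\rangle/X\}$; consequently $\langle X|X=t_X\rangle=_{RS}t_X\{\langle X|X=t_X\rangle/X\}$. Since this is exactly the observation made earlier in the text, I would simply invoke it rather than reprove it.

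For the greatest-solution part, let $p$ be an arbitrary solution, so that $p=_{RS}t_X\{p/X\}$. Unwinding the kernel $=_{RS}$ yields in particular the single inequation $p\RS t_X\{p/X\}$. Because $X$ is strongly guarded in $t_X$, Lemma~\ref{L:GREATEST_SOLUTION} applies verbatim and gives $p\RS\langle X|X=t_X\rangle$. As $p$ was an arbitrary solution and $\langle X|X=t_X\rangle$ is itself a solution, this shows that $\langle X|X=t_X\rangle$ is the greatest solution with respect to $\RS$. It then remains to settle the consistency clause. The forward direction is immediate: if $\langle X|X=t_X\rangle\notin F$ then, being a solution, it is a consistent solution, so consistent solutions exist. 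For the converse, suppose $p$ is a consistent solution, i.e. $p\notin F$ and $p=_{RS}t_X\{p/X\}$; as above $p\RS t_X\{p/X\}$, whence $p\RS\langle X|X=t_X\rangle$ by Lemma~\ref{L:GREATEST_SOLUTION}. Applying the second assertion of Theorem~\ref{L:LLTS}(3) with the trivial context $C_X\equiv X$ then converts $p\notin F$ into $\langle X|X=t_X\rangle\notin F$, as required.

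Since Lemma~\ref{L:GREATEST_SOLUTION} (built on the up-to technique of Definition~\ref{D:ALT_RS_UPTO}) and Lemma~\ref{L:precongruence_F} (the proof-tree and well-foundedness argument for the predicate $F$) already carry all of the technical weight, I expect no genuine obstacle in the theorem proper; it really is a corollary. The only two points requiring a little care are the correct extraction of the one-sided inequation $p\RS t_X\{p/X\}$ from the two-sided $=_{RS}$, and the decision to read off the consistency equivalence from the precongruence consequence of Theorem~\ref{L:LLTS}(3) with $C_X\equiv X$ rather than attempting a fresh induction on proof trees for $\langle X|X=t_X\rangle F$.
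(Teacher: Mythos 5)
Your proposal is correct and follows exactly the route the paper intends: the paper presents Theorem~\ref{T:GREATEST_SOLUTION} as an immediate consequence of Lemma~\ref{L:GREATEST_SOLUTION} (extracting $p\RS t_X\{p/X\}$ from $p=_{RS}t_X\{p/X\}$ to get $p\RS\langle X|X=t_X\rangle$) together with Theorem~\ref{L:LLTS}(3) applied with $C_X\equiv X$ for the consistency clause, plus the earlier observation that $\langle X|X=t_X\rangle$ is itself a solution. No discrepancies to report.
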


We give a brief discussion to conclude this note.
For Theorem~\ref{T:GREATEST_SOLUTION}, the hypothesis that $X$ is strongly guarded cannot be relaxed to that $X$ is weakly guarded.
For instance,
consider the equation $X=_{RS}\tau.X$, since $p=_{RS}\tau.p$ always holds for any $p$, such equation has infinitely many consistent solutions. However, since $\langle X|X=\tau.X \rangle$ is inconsistent by Theorem~\ref{L:LLTS}(1) and (LTS2) in Definition~\ref{D:LLTS}, it is the least solution of the equation $X=_{RS}\tau.X$.
%
%

 $\; $

\noindent \textbf{References}

    [1] {R.~Bol},
    {J.F.~Groote},
    {The meaning of negative premises in transition system specifications},
    {JACM}
    {43}
  ({1996})
    {863-914}.

    [2] {G.~L\"{u}ttgen},
    {W.~Vogler},
    {Conjunction on processes: full-abstraction via ready-tree semantics},
    {TCS}
    {373}
  ({1-2})
  ({2007})
    {19-40}.

   [3] {G.~L\"{u}ttgen},
    {W.~Vogler},
    {Ready simulation for concurrency: it's logical},
    {Inform. \& comp.}
    {208}
  ({2010})
    {845-867}.

   [4] {G.~L\"{u}ttgen},
    {W.~Vogler},
    {Safe reasoning with Logic LTS},
    {TCS}
    {412}
  ({2011})
    {3337-3357}.

   [5] {Y.~Zhang, Z.H.~Zhu, J.J. Zhang},
    {On recursive operations over logic LTS},
    {MSCS, available on CJO 2014 doi:10.1017/S0960129514000073.}

\end{document}